\newtheorem*{theorem*}{Theorem}
\newtheorem*{corollary*}{Corollary}
\newtheorem*{proposition*}{Proposition}
\newtheorem*{lemma*}{Lemma}
\newtheorem*{fact*}{Fact}
\newtheorem*{definition*}{Definition}
\newtheorem*{conjecture*}{Conjecture}
\newtheorem{theorem}{Theorem}
\newtheorem{proposition}{Proposition}
\newtheorem{assumption}{Assumption}
\newtheorem{definition}{Definition}
\newtheorem{lemma}{Lemma}
\DeclareMathOperator{\T}{T}
\DeclareMathOperator{\argmax}{argmax}
\DeclarePairedDelimiterX{\inp}[2]{\langle}{\rangle}{#1, #2}
\titleformat{\subsubsection}[runin]
        {\normalfont\bfseries}
        {\thesubsubsection}
        {0.5em}
        {}
        [.]
\title{Ironing Without Concavification\thanks{I am grateful to Ben Brooks, Piotr Dworczak and Ilya Segal for helpful comments.}}
\author{Filip Tokarski \\ Stanford GSB}
\begin{document}
\maketitle
\date{} 
\vspace*{-1cm} 

\begin{abstract}
I propose a new approach to solving standard screening problems when the monotonicity constraint binds. A simple geometric argument shows that when virtual values are quasi-concave, the optimal allocation can be found by appropriately truncating the solution to the relaxed problem. I provide an algorithm for finding this optimal truncation when virtual values are concave.
\end{abstract}

\section{Introduction}

This note revisits the problem of finding optimal menus in standard single-agent screening environments. When agents' payoffs are quasi-linear and satisfy single-crossing, the standard approach involves writing the objective as an integral of ``virtual values'' which depend only on one type's allocation and separately choosing each type's allocation to maximize its corresponding virtual value. Incentive compatibility, however, requires that the allocation be non-decreasing in type.
If this constraint binds, it is standard to transform virtual values so that after the transformation, pointwise maximization yields an increasing solution. This transformation, known as ironing, was described by \cite{myerson} and subsequently generalized by \cite{toikka}. A different approach uses optimal control methods \citep{guesnerie1984complete,hellwig2008maximum,ruiz2011non}.

I propose an alternative approach that involves solving the relaxed problem without the monotonicity constraint and transforming the resulting allocation to satisfy monotonicity. Theorem \ref{th:1} says that whenever an optimal allocation rule exists, it can be found by optimally truncating the solution to the relaxed problem. Moreover, the optimal truncation is pinned down by the allocations of types at which the solution to the relaxed problem changes monotonicity. These observations generalize insights about the structure of solutions from the literature---they require no continuity or differentiability assumptions, do not need virtual values to be concave, and do not assume that an agent's allocation is chosen from a compact interval. Therefore, in contrast to previous work, my results also apply when the planner can only assign discrete allocations.

I subsequently use these insights to develop a simple algorithm for finding the optimal allocation rule under the additional assumptions that virtual values are concave and that each agent's allocation is chosen from a compact interval. 

\section{Problem}\label{sec:probl}
Agents with types $\theta\in[0,1]$ are assigned allocations from a compact set $\mathcal X \subset \mathbb{R}$. The planner chooses a non-decreasing allocation rule $x:[0,1]\to \mathcal{X}$ to maximize:
\[
F[x]=\int_0^1 J(x(\theta),\theta) d \theta.
\]
I refer to $J : \mathcal{X}\times [0,1]\to \mathbb{R}$ as the {virtual value}.
\begin{assumption}\label{ass:1} The virtual value satisfies the following properties:
\begin{enumerate}
\item $J(\cdot, \theta)$ is weakly quasi-concave for every $\theta\in [0,1]$.
        \item $J(x,\theta)$ is uniformly bounded on $\mathcal{X}\times [0,1]$.
\end{enumerate}
\end{assumption}
I will call any $x$ that maximizes $J(\cdot,\theta)$ pointwise a solution to the {relaxed problem}. I also assume a well-behaved solution like that exists:
\begin{assumption}\label{ass:2}
There exists a solution to the relaxed problem, $x_R$, that is piecewise monotonic.
\end{assumption}

\section{Structure of the solution}\label{sec:struc}

In this section I present Theorem \ref{th:1} describing the structure of the solution to the planner's problem. I first introduce the following definition:
\begin{definition} A point $i\in (0,1)$ is a \textbf{critical point} if $x_R$ changes monotonicity there, that is, if $x_R$ is monotonic on $(i-\epsilon,i)$ for some $\epsilon>0$ but is not monotonic on $(i-\epsilon, i+\delta)$ for any $\delta>0$. By convention, I also call $0$ and $1$ critical points. I use $i_n$ to denote the $n$-th critical point and $\mathcal{I}$ to denote the set of critical points.
\end{definition}
I also define the function $x^*_v: [0,1]\to \mathcal{X}$ parametrized by $v=(v_1,\dots,v_{|\mathcal{I}|-1})\in \mathcal{X}^{|\mathcal I|-1}$:
        \begin{equation}\label{eq:tildex}
                x^*_v(\theta) =
                \begin{cases}
                \min \left\{ v_{n+1}, \max \left\{ v_{n}, x_R(\theta) \right\} \right\} & \text{if }\theta \in[i_n,i_{n+1}) \text{ where $x_R$ is increasing}, \\
                v_n & \text{if }\theta \in[i_n,i_{n+1}) \text{ where $x_R$ is decreasing},
        \end{cases}
        \end{equation}
where, by convention, $v_{|\mathcal{I}|}=x^*_v(1) = \max_{x\in\mathcal{X}} x$.
\begin{theorem}\label{th:1}
If the planner's problem has a solution, it has a solution of the form $x^*_v$ for some  $v\in \mathcal{X}^{|\mathcal I|-1}$. This solution can be recovered by solving:
        \begin{equation}\label{eq:simple}
                \max_{v \in \mathcal{X}^{|\mathcal I|-1}} F[x^*_v] \quad \text{subject to} \quad v_1 \leq v_2\leq \dots \leq v_{|\mathcal I|-1}.\tag{I}
        \end{equation}
        \end{theorem}
Theorem \ref{th:1} says that when the planner's problem has a solution, we can find it by optimally truncating the solution to the relaxed problem, $x_R$. Moreover, the optimal truncation is pinned down by the allocations of types at which $x_R$ changes monotonicity.

\paragraph{Proof of  Theorem \ref{th:1}.} I first prove three lemmas:
\begin{lemma}\label{fact:closer}
If $x_1,x_2$ are allocation rules and $x_2$ lies pointwise between $x_1$ and $x_R$, then $F[x_2]\geq F[x_1]$. 
\end{lemma}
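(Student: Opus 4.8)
The plan is to reduce the lemma to a pointwise comparison of the integrands defining $F[x_1]$ and $F[x_2]$. The engine is Assumption~\ref{ass:1}(1): for each fixed $\theta$, $J(\cdot,\theta)$ is quasi-concave, so a point that lies between a maximizer of $J(\cdot,\theta)$ and an arbitrary point $y\in\mathcal X$ yields a value at least $J(y,\theta)$. The only substantive input needed is that $x_R$ selects, type by type, an element that maximizes the virtual value.

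First I would record that, because $x_R$ solves the relaxed problem and the objective $F$ decouples additively across types, $x_R(\theta)\in\argmax_{x\in\mathcal X}J(x,\theta)$ for almost every $\theta\in[0,1]$. The argument is the standard one: if $J(x_R(\theta),\theta)<\sup_{x\in\mathcal X}J(x,\theta)$ on a set of positive measure, one could replace $x_R$ there by a measurable selection of (near-)maximizers and strictly increase $F$, contradicting optimality of $x_R$. Making this fully rigorous is the step I expect to be the main obstacle, since it requires (i) a measurable-selection argument and (ii) checking that the pointwise supremum over $\mathcal X$ is attained — the latter holding because $\mathcal X$ is compact and $J(\cdot,\theta)$ is quasi-concave, or, if one wishes to avoid any regularity hypothesis on $J$, by carrying an $\varepsilon$ through the estimates. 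Everything after this step is elementary.

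Next, fix a $\theta$ at which $x_R(\theta)$ maximizes $J(\cdot,\theta)$. Since, by hypothesis, $x_2(\theta)$ lies weakly between $x_1(\theta)$ and $x_R(\theta)$, quasi-concavity of $J(\cdot,\theta)$ gives
\[
J(x_2(\theta),\theta)\;\ge\;\min\bigl\{\,J(x_1(\theta),\theta),\,J(x_R(\theta),\theta)\,\bigr\}\;=\;J(x_1(\theta),\theta),
\]
where the equality uses $J(x_R(\theta),\theta)=\max_{x\in\mathcal X}J(x,\theta)\ge J(x_1(\theta),\theta)$. Thus $J(x_2(\theta),\theta)\ge J(x_1(\theta),\theta)$ for almost every $\theta$. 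Integrating this inequality over $[0,1]$ — both integrands being integrable by Assumption~\ref{ass:1}(2), since the allocation rules to which the lemma is applied are piece-wise monotonic — yields $F[x_2]=\int_0^1 J(x_2(\theta),\theta)\,d\theta\ge\int_0^1 J(x_1(\theta),\theta)\,d\theta=F[x_1]$, which is the claim.
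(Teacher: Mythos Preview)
Your proof is correct and follows the same route as the paper: use optimality of $x_R$ to get $J(x_R(\theta),\theta)\ge J(x_1(\theta),\theta)$ a.e., then apply quasi-concavity pointwise and integrate. The one place the paper is slicker is the step you flag as the ``main obstacle'': rather than showing $x_R(\theta)\in\argmax_{x\in\mathcal X}J(x,\theta)$ a.e.\ (which, as you note, invites a measurable-selection argument), the paper only needs the weaker inequality $J(x_R(\theta),\theta)\ge J(x_1(\theta),\theta)$ a.e., and gets it by replacing $x_R$ with the \emph{given} rule $x_1$ on the failure set---no selection required. This sidesteps the obstacle entirely.
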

\begin{proof}
$J(x_R(\theta),\theta) \geq J(x_1(\theta),\theta)$ for all $\theta$ by definition. Now, by quasi-concavity of $J(\cdot,\theta)$:
        \begin{align*}
        F[x_2] = \int_0^1 J(x_2(\theta),\theta) d \theta &\geq \int_0^1 \min\{J(x_1(\theta),\theta),J(x_R(\theta),\theta) \} d \theta\\
        & =  \int_0^1 J(x_1(\theta),\theta) d \theta = F[x_1].
        \end{align*}
\end{proof}
Lemma \ref{fact:closer} tells us that the objective always increases when we move the allocation rule pointwise closer to the first-best one. This property underpins the proofs of Lemmas \ref{lem:decreasing} and \ref{lem:increasing}, illustrated in Figures \ref{fig:subfig3} and \ref{fig:subfig4}.

\begin{lemma}\label{lem:decreasing} 
Suppose $x_R$ is decreasing on $[a,b)$. Then any increasing allocation rule $x$ can be weakly improved upon by some allocation rule $x^*$ that is constant on $[a,b)$ and coincides with $x$ elsewhere.
\end{lemma}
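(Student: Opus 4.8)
Assume $a<b$ (the case $a=b$ is trivial). The plan is to replace $x$ on $[a,b)$ by a single well-chosen value $c^{*}\in\mathcal X$, leaving $x$ untouched elsewhere, and to read off $F[x^{*}]\geq F[x]$ from Lemma \ref{fact:closer}. For this to work $c^{*}$ needs three properties: (i) it lies between $x(\theta)$ and $x_{R}(\theta)$ for every $\theta\in[a,b)$, so that the new rule $x^{*}$ is pointwise between $x$ and $x_{R}$ and Lemma \ref{fact:closer} applies; (ii) $\underline c\leq c^{*}\leq\overline c$, where $\underline c:=\sup_{\theta<a}x(\theta)$ (taken to be $-\infty$ when $a=0$) and $\overline c:=\inf_{\theta\geq b}x(\theta)$, which is exactly what makes $x^{*}$ increasing given that $x$ is; and (iii) $c^{*}\in\mathcal X$. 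So the whole proof comes down to exhibiting such a $c^{*}$.

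The first thing I would do is record, for each $\theta\in[a,b)$, the closed interval $I(\theta)$ with endpoints $x(\theta)$ and $x_{R}(\theta)$, and show that $\bigcap_{\theta\in[a,b)}I(\theta)$ is non-empty. It suffices that any two of these intervals meet: if $I(\theta_{1})$ lay strictly above $I(\theta_{2})$ then $\min I(\theta_{1})>\max I(\theta_{2})$, which forces both $x(\theta_{1})>x(\theta_{2})$ and $x_{R}(\theta_{1})>x_{R}(\theta_{2})$; but whichever of $\theta_{1}<\theta_{2}$ or $\theta_{1}>\theta_{2}$ holds, one of these contradicts monotonicity of $x$ (increasing) or of $x_{R}$ (decreasing on $[a,b)$). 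Pairwise-intersecting intervals of $\mathbb R$ share a point, so $\bigcap_{\theta}I(\theta)=[p,q]$ with $p=\sup_{\theta}\min\{x(\theta),x_{R}(\theta)\}$, $q=\inf_{\theta}\max\{x(\theta),x_{R}(\theta)\}$, and $p\leq q$.

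I would then propose $c^{*}:=\max\{p,\underline c\}$ and verify (i)--(iii). For (iii): $c^{*}$ is either $p$, a supremum of numbers $\min\{x(\theta),x_{R}(\theta)\}$, or $\underline c$, a supremum of values of $x$; in both cases it is a supremum of elements of the compact --- hence closed --- set $\mathcal X$, so it lies in $\mathcal X$. For (ii): $c^{*}\geq\underline c$ by construction, and $c^{*}\leq\overline c$ because $p\leq\overline c$ (each $\min\{x(\theta),x_{R}(\theta)\}\leq x(\theta)\leq\overline c$ on $[a,b)$, using $x$ increasing) and $\underline c\leq x(a)\leq\overline c$. For (i): $c^{*}\geq p$, while $c^{*}\leq q$ because $p\leq q$ and $\underline c\leq x(a)\leq q$ (each $\max\{x(\theta),x_{R}(\theta)\}\geq x(\theta)\geq x(a)$ on $[a,b)$); hence $c^{*}\in[p,q]\subseteq I(\theta)$ for every $\theta\in[a,b)$.

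Finally I would let $x^{*}$ equal $c^{*}$ on $[a,b)$ and $x$ elsewhere: it is increasing by (ii), and pointwise between $x$ and $x_{R}$ by (i) (trivially so outside $[a,b)$, where $x^{*}=x$), so Lemma \ref{fact:closer} gives $F[x^{*}]\geq F[x]$, which is the claim. I expect the main obstacle to be exactly the non-emptiness of $\bigcap_{\theta}I(\theta)$ together with the simultaneous demand that the common point can be taken in $\mathcal X$ and in the admissible window $[\underline c,\overline c]$; this is the only place the three hypotheses ($x$ increasing, $x_{R}$ decreasing on $[a,b)$, $\mathcal X$ compact) are all genuinely used, and everything else is bookkeeping.
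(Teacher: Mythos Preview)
Your proof is correct and follows the same overall scheme as the paper: pick a constant $c^{*}$ that lies pointwise between $x$ and $x_{R}$ on $[a,b)$ and is compatible with monotonicity of $x$ at the endpoints, then invoke Lemma~\ref{fact:closer}. The only difference is in how the constant is located. The paper does a three-way case split on whether the graphs of $x$ and $x_{R}$ cross on $(a,b)$, taking $c^{*}$ to be $\lim_{\theta\to a^{+}}x(\theta)$, $x(\sup\{\theta:x_{R}(\theta)\geq x(\theta)\})$, or $\lim_{\theta\to b^{-}}x(\theta)$ accordingly; you instead run a one-dimensional Helly argument to show $\bigcap_{\theta}I(\theta)=[p,q]\neq\emptyset$ and then pick $c^{*}=\max\{p,\underline c\}$. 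Your route avoids the case analysis and makes the monotonicity check at the endpoints and the membership $c^{*}\in\mathcal X$ completely explicit (using compactness of $\mathcal X$ to get $p,\underline c\in\mathcal X$), at the cost of a slightly more abstract construction; the paper's version is more concrete and pictorial but leaves a few endpoint verifications to the reader.
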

        \begin{proof}
Fix any increasing $x$. Consider $x^*$ that coincides with $x$ on $[0,1]\setminus [a,b)$ and takes the following values for $\theta \in [a,b)$:
\[
        x^*(\theta) = \begin{cases}
                x(a^+) & \text{if } x_R(\theta) \leq x(\theta) \text{ for all }\theta \in [a,b), \\
                x(b^-) & \text{if } x_R(\theta) \geq x(\theta) \text{ for all }\theta \in [a,b),\\
                \max\{ x(t^-),x_R(t^+) \}  & \text{otherwise},
                      \end{cases}
\]
where $t:=\sup\{\theta\in(a,b): x_R(\theta)\ge x(\theta)\}$. Note $x^*$ is increasing and pointwise between $x_R$ and $x$, so $F[x^*] \geq F[x]$ by Lemma \ref{fact:closer}.
        \end{proof}

\begin{lemma}\label{lem:increasing}
Suppose $x_R$ is increasing on $[a,b)$. Then any increasing allocation rule $x$ can be weakly improved upon by:
\[
x^*(\theta)=
\begin{cases}
\min\bigl\{x(b),\,\max\{x(a),\,x_R(\theta)\}\bigr\}, & \theta\in[a,b),\\[4pt]
x(\theta), & \theta\notin[a,b).
\end{cases}
\]
\end{lemma}
\begin{proof}
Fix any increasing $x$. Note $x^*$ is pointwise between $x$ and $x_R$, so $F[x^*]\geq F[x]$ by Lemma \ref{fact:closer}. Moreover, $x^*$ is increasing because $x_R$ is increasing on $[a,b)$ by assumption.
\end{proof}

\begin{figure}[h!]
  \centering
  \begin{subfigure}{0.4\linewidth}
        \begin{tikzpicture}[font=\footnotesize,scale=0.85]
                \draw[color=red,ultra thick] (0,2) -- (6,2);
                \draw[color=teal,thick] (0,0.75) to[out=20, in=-140] (6,4);
                \draw[color=blue,thick] (0,3.75) to[out=-20, in=180] (6,1);
                \draw(5.9,2.8) node[label=above:{$x(\theta)$}] {};
                \draw(5.9,1) node[label=above:{$x^*(\theta)$}] {};
                \draw(5.9,0.1) node[label=above:{$x_R(\theta)$}] {};
                \draw[thick, dashed] (3,2) -- (3,0);
                \draw[->,thick] (0,0) -- (6,0);
                \draw (0,0) node[anchor=north] {$a$};
                \draw (6,0) node[anchor=north] {$b$};
                \draw (3,0) node[anchor=north] {$\sup$};
                \draw (3,-0.4) node[anchor=north] {$\{\theta \in (a,b) : x_R(\theta) \geq x(\theta) \}$};
                \draw[->,thick] (0,0) -- (0,4.5);
                \end{tikzpicture}
                \captionsetup{justification=centering} 
      \caption{Figure \ref{fig:subfig3}: Constructing an improvement in the proof of Lemma \ref{lem:decreasing}.}
      \label{fig:subfig3}
  \end{subfigure}
  \ \ \ \ \ \  \ 
  \begin{subfigure}{0.4\linewidth}
        \begin{tikzpicture}[font=\footnotesize, scale=0.85]
                \draw[thick, dashed] (0.6,0) -- (0.6,0.8);
                \draw[thick, dashed] (5.15,0) -- (5.15,3.03);
                \draw (0.6,0) node[anchor=north] {$a$};
                \draw (5.15,0) node[anchor=north] {$b$};
                \draw[color=blue,thick] (0,0.2) to[out=50, in=-120] (6,4);
                \draw[color=teal,thick] (0,0.8) to[out=50, in=-120] (6,3);
                \draw[color=red,ultra thick] (0.6,0.8) to[out=38, in=-142] (5.17,3);
                \draw[color=red, ultra thick] (0,0.8) -- (0.63,0.8); 
                \draw[color=red, ultra thick] (5.15,3) -- (6,3); 
                \draw(3.8,2.2) node[label=above:{$x^*(\theta)$}] {};
                \draw(5,3.1) node[label=above:{$x_R(\theta)$}] {};
                \draw(4.5,1) node[label=above:{$x(\theta)$}] {};
                \draw[->,thick] (0,0) -- (6,0);
                \draw[->,thick] (0,0) -- (0,4.5);
            \end{tikzpicture}
            \vspace*{0.437cm} 
                \captionsetup{justification=centering} 
      \caption{Figure \ref{fig:subfig4}: Constructing an improvement in the proof of Lemma \ref{lem:increasing}.}
      \label{fig:subfig4}
  \end{subfigure}
\end{figure}
Observations similar to Lemmas \ref{lem:decreasing} and \ref{lem:increasing} were made by \cite{sandmann2022single} who studies the optimality of sparse menus in a price discrimination problem. Given the above lemmas, the proof of Theorem \ref{th:1} is straightforward. By Assumption \ref{ass:2}, $[0,1)$ can be partitioned into finitely many intervals $[i_n,i_{n+1})$ where $x_R$ is monotonic. Then, by Lemmas \ref{lem:decreasing} and \ref{lem:increasing}, if the planner's problem has a solution, it has one of the form in \eqref{eq:tildex} with $v_1 \leq v_2\leq \dots \leq v_{|\mathcal I|-1}$. Solving problem \eqref{eq:simple} recovers this solution.

\section{Solution algorithm}\label{sec:alg}
I now provide a simple algorithm that solves planner's problem under the following additional assumption:
\begin{assumption}\label{ass:reg} 
The virtual value satisfies the following properties:
\begin{enumerate}
        \item The planner chooses allocations from a closed interval: $\mathcal{X}=[l,h]$.
        \item $J(\cdot,\theta)$ is continuous and weakly concave for every $\theta\in [0,1]$. 
\end{enumerate} 
\end{assumption}
The algorithm uses the following transformation $\T:[l,h]^{[0,1]}\times [l,h] \times \mathbb{N}\to {[l,h]}^{[0,1]}$:
\[
        \T[x,\tilde v,n](\theta) :=
        \begin{cases}
        \min \{\tilde v,x(\theta)\} & \text{if }\theta < i_n,\\
        \tilde v & \text{if }\theta\in[i_n,i_{n+1}) \text{ where $x_R$ is decreasing}, \\
        \max \left\{\tilde v, x(\theta)\right\} &  \text{if }\theta\in[i_n,i_{n+1}) \text{ where $x_R$ is increasing}, \\
        h& \text{if $\theta=i_{n+1}$},\\
        x(\theta) & \text{if }\theta > i_{n+1}.
        \end{cases}
\]
When $\T[\cdot \ ,v,n]$ is applied to $x$, the allocation rule is truncated from above by $\tilde v$ before the $n$th critical point and set equal to $\tilde v$ or truncated by it from below between the $n$th and $(n+1)$st critical points. It is also set to $h$ at the $(n+1)$st critical point. The following algorithm takes in the solution to the relaxed problem, $x_R$, and the set of critical points $i\in \mathcal{I}$ as inputs and iteratively applies this transformation to produce a solution to the planner's problem:

\begin{algorithm}[H]\caption{}
        \setstretch{1.1}\label{alg:1}
        \begin{algorithmic}
            \State $r \gets x_R$
            \State $n\gets 1$
            \While{$n\leq |\mathcal{I}|-1$}
                \State $v^* \gets \argmax_{\tilde v\in [l,h]} \int_0^{i_{n+1}} J (\T[r,\tilde v,n](\theta),\theta ) d\theta$
                \State $r \gets \T[r,v^*,n]$ 
                \State $n\gets n+1$
                \EndWhile
                \State \Return $r$
\end{algorithmic}
\end{algorithm}

Intuitively, each iteration of Algorithm \ref{alg:1} starts with the optimum subject to monotonicity from 0 until the $n$th critical point and ``irons out'' this partial solution further to produce a solution subject to monotonicity until the $(n+1)$st critical point (Figure \ref{fig:algo}).

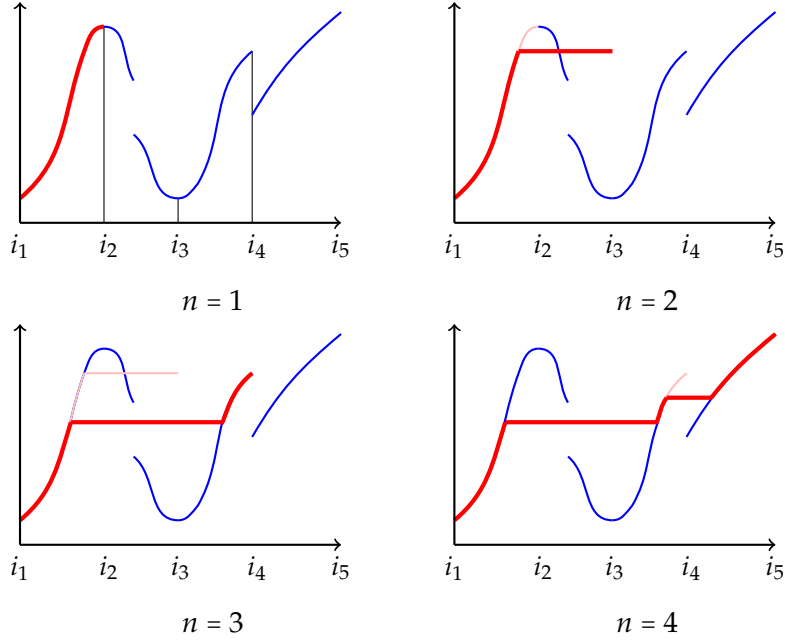
\begin{figure}[h!]
        \centering
        \begin{subfigure}{0.33\linewidth}
              \begin{tikzpicture}[font=\footnotesize,scale=0.65]
                  \draw (0,0.5) node[anchor=north] {$i_1$};
                  \draw (1.8,0.5) node[anchor=north] {$i_2$};
                  \draw (1.7,4.5) -- (1.7,0.5);
                  \draw (3.25,0.5) node[anchor=north] {$i_3$};
                  \draw (3.2,1) -- (3.2,0.5);
                  \draw (4.8,0.5) node[anchor=north] {$i_4$};
                  \draw (4.7,4) -- (4.7,0.5);
                  \draw (6.5,0.5) node[anchor=north] {$i_5$};
                  \draw[color=red,ultra thick] (0,1) to [out=40, in=250]  (1.3,4) to [out=70, in=180] (1.7,4.5);
                  \draw[color=blue,thick] (1.7,4.5) to[out=360, in=120] (2.3,3.4);
                  \draw[color=blue,thick] (2.3,2.3) to[out=-40, in=180] (3.2,1);
                  \draw[color=blue,thick] (3.2,1) to[out=0, in=-130] (3.6,1.3) to[out=60, in=-140] (4.7,4);
                  \draw[color=blue,thick] (4.7,2.7) to[out=60, in=-140] (6.5,4.8);
                  \draw[->,thick] (0,0.5) -- (6.5,0.5);
                  \draw[->,thick] (0,0.5) -- (0,5);
              \end{tikzpicture}
            \caption{$n=1$}
        \end{subfigure}
        \begin{subfigure}{0.33\linewidth}
              \begin{tikzpicture}[font=\footnotesize,scale=0.65]
                      \draw (0,0.5) node[anchor=north] {$i_1$};
                      \draw (1.8,0.5) node[anchor=north] {$i_2$};
                      \draw (3.25,0.5) node[anchor=north] {$i_3$};
                      \draw (4.8,0.5) node[anchor=north] {$i_4$};
                      \draw (6.5,0.5) node[anchor=north] {$i_5$};
                      \draw[color=pink,thick] (0,1) to [out=40, in=250]  (1.3,4) to [out=70, in=180] (1.7,4.5);
                      \draw[color=blue,thick] (1.7,4.5) to[out=360, in=120] (2.3,3.4);
                      \draw[color=blue,thick] (2.3,2.3) to[out=-40, in=180] (3.2,1);
                      \draw[color=blue,thick] (3.2,1) to[out=0, in=-130] (3.6,1.3) to[out=60, in=-140] (4.7,4);
                      \draw[color=blue,thick] (4.7,2.7) to[out=60, in=-140] (6.5,4.8);
                      \draw[color=red,ultra thick] (0,1) to [out=40, in=250]  (1.3,4);
                      \draw[color=red,ultra thick] (1.3,4) -- (3.2,4);
                      \draw[->,thick] (0,0.5) -- (6.5,0.5);
                      \draw[->,thick] (0,0.5) -- (0,5);
                  \end{tikzpicture}
            \caption{$n=2$}
        \end{subfigure}
        \begin{subfigure}{0.33\linewidth}
              \begin{tikzpicture}[font=\footnotesize,scale=0.65]
                  \draw (0,0.5) node[anchor=north] {$i_1$};
                  \draw (1.8,0.5) node[anchor=north] {$i_2$};
                  \draw (3.25,0.5) node[anchor=north] {$i_3$};
                  \draw (4.8,0.5) node[anchor=north] {$i_4$};
                  \draw (6.5,0.5) node[anchor=north] {$i_5$};
                  \draw[color=blue,thick] (0,1) to [out=40, in=250]  (1.3,4) to [out=70, in=180] (1.7,4.5);
                  \draw[color=blue,thick] (1.7,4.5) to[out=360, in=120] (2.3,3.4);
                  \draw[color=blue,thick] (2.3,2.3) to[out=-40, in=180] (3.2,1);
                  \draw[color=blue,thick] (3.2,1) to[out=0, in=-130] (3.6,1.3) to[out=60, in=-140] (4.7,4);
                  \draw[color=blue,thick] (4.7,2.7) to[out=60, in=-140] (6.5,4.8);
                  \draw[color=pink, thick] (0,1) to [out=40, in=250]  (1.3,4);
                  \draw[color=pink, thick] (1.3,4) -- (3.2,4);
                  \draw[color=red,ultra thick] (0,1) to [out=40, in=253]  (1.03,3);
                  \draw[color=red,ultra thick] (4.1,3) to [out=70, in=-140] (4.7,4);
                  \draw[color=red,ultra thick] (1.028,3) -- (4.13,3);
                  \draw[->,thick] (0,0.5) -- (6.5,0.5);
                  \draw[->,thick] (0,0.5) -- (0,5);
              \end{tikzpicture}
            \caption{$n=3$}
        \end{subfigure}
        \begin{subfigure}{0.33\linewidth}
              \begin{tikzpicture}[font=\footnotesize,scale=0.65]
                      \draw (0,0.5) node[anchor=north] {$i_1$};
                      \draw (1.8,0.5) node[anchor=north] {$i_2$};
                      \draw (3.25,0.5) node[anchor=north] {$i_3$};
                      \draw (4.8,0.5) node[anchor=north] {$i_4$}; 
                      \draw (6.5,0.5) node[anchor=north] {$i_5$};
                      \draw[color=blue,thick] (0,1) to [out=40, in=250]  (1.3,4) to [out=70, in=180] (1.7,4.5);
                      \draw[color=blue,thick] (1.7,4.5) to[out=360, in=120] (2.3,3.4);
                      \draw[color=blue,thick] (2.3,2.3) to[out=-40, in=180] (3.2,1);
                      \draw[color=blue,thick] (3.2,1) to[out=0, in=-130] (3.6,1.3) to[out=60, in=-110] (4.12,3);
                      \draw[color=blue,thick] (4.7,2.7) to[out=60, in=-140] (6.5,4.8);
                      \draw[color=pink,thick] (4.07,3) to [out=70, in=-140] (4.7,4);
                      \draw[color=red,ultra thick] (0,1) to [out=40, in=253]  (1.03,3);
                      \draw[color=red,ultra thick] (4.1,3) to [out=70, in=-127] (4.3,3.5);
                      \draw[color=red,ultra thick] (1.028,3) -- (4.13,3);
                      \draw[color=red,ultra thick] (4.298,3.5) -- (5.22,3.5);
                      \draw[color=red,ultra thick] (5.2,3.5) to[out=52, in=-140] (6.5,4.8);
                      \draw[->,thick] (0,0.5) -- (6.5,0.5);
                      \draw[->,thick] (0,0.5) -- (0,5);
                  \end{tikzpicture}
            \caption{$n=4$}
        \end{subfigure}
        \caption{Algorithm \ref{alg:1} recursively transforming $x_R$ (blue) into subsequent $\T[x,v^*,n]$ (red).}\label{fig:algo}
      \end{figure}
\begin{theorem}\label{th:2}
Under Assumption \ref{ass:reg}, the output of Algorithm \ref{alg:1} solves the planner's problem.
\end{theorem}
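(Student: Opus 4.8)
The plan is to show that Algorithm~\ref{alg:1} correctly solves the finite--dimensional problem~\eqref{eq:simple} of Theorem~\ref{th:1}, by first reducing~\eqref{eq:simple} to a \emph{separable isotonic} concave maximization and then recognizing the algorithm as the pool--adjacent--violators procedure for that reduced problem. Write $N=|\mathcal I|-1$, and for the piece $[i_n,i_{n+1})$ let $c_n(v_n,v_{n+1})$ be its contribution to $F[x^*_v]$. The first observation is that $c_n$ is \emph{additively separable}: if $x_R$ is decreasing on $[i_n,i_{n+1})$ then $c_n(v_n,v_{n+1})=\int_{i_n}^{i_{n+1}}J(v_n,\theta)\,d\theta$ depends on $v_n$ only; and if $x_R$ is increasing there, then since $\min\{v_{n+1},\max\{v_n,x_R(\theta)\}\}$ equals $v_n$ where $x_R(\theta)\le v_n$, equals $x_R(\theta)$ where $v_n<x_R(\theta)<v_{n+1}$, and equals $v_{n+1}$ where $x_R(\theta)\ge v_{n+1}$, a one--line computation gives $c_n(v_n,v_{n+1})=\alpha_n(v_n)+\beta_n(v_{n+1})$ up to an additive constant, where $\alpha_n(v):=\int_{i_n}^{i_{n+1}}J(\max\{v,x_R(\theta)\},\theta)\,d\theta$ and $\beta_n(v):=\int_{i_n}^{i_{n+1}}J(\min\{v,x_R(\theta)\},\theta)\,d\theta$. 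Because $x_R(\theta)$ is a pointwise maximizer of $J(\cdot,\theta)$ and $J(\cdot,\theta)$ is \emph{concave} (Assumption~\ref{ass:reg}), the map $v\mapsto J(\max\{v,x_R(\theta)\},\theta)$ is constant up to $x_R(\theta)$ and then concave decreasing, while $v\mapsto J(\min\{v,x_R(\theta)\},\theta)$ is concave increasing up to $x_R(\theta)$ and then constant; both are concave, so $\alpha_n$ and $\beta_n$ are concave and bounded. Summing over $n$, using $v_{N+1}=h$, and dropping constants turns~\eqref{eq:simple} into
\begin{equation*}
\max\Big\{\ \textstyle\sum_{n=1}^{N}\gamma_n(v_n)\ :\ l\le v_1\le v_2\le\cdots\le v_N\le h\ \Big\},\tag{$\star$}
\end{equation*}
where $\gamma_1:=\alpha_1$ and $\gamma_n:=\alpha_n+\beta_{n-1}$ for $n\ge2$ are concave.

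The second ingredient is a monotone--comparative--statics fact for $(\star)$: for each $k\le N$ and each cap $c\in[l,h]$, the maximizer of $\sum_{n\le k}\gamma_n(v_n)$ over nondecreasing $v\in[l,c]^k$ is the vector $\min\{c,\hat v\}$ (componentwise), where $\hat v$ is the maximizer over nondecreasing $v\in[l,h]^k$ (taking, say, smallest maximizers throughout). This follows from the explicit pool--adjacent--violators description of the optimum of a separable concave objective under a monotonicity constraint---pooled blocks of coordinates, each block held at the unconstrained joint maximizer of its terms---together with the one--variable fact $\argmax_{[l,c]}g=\min\{c,\argmax_{[l,h]}g\}$ for concave $g$: lowering the cap to $c$ simply caps every block value at $c$, which creates no new monotonicity violations and hence no new pooling. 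This is exactly the step that fails for merely quasi--concave $J$.

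With these in hand I would prove by induction on $n$ that after iteration $n$ of Algorithm~\ref{alg:1} the current rule $r$ equals $x^*_{\hat v^{(n)}}$ on $[0,i_{n+1})$ (its transient value $h$ at $i_{n+1}$ being immaterial), where $\hat v^{(n)}$ is the maximizer of the $n$--term prefix of $(\star)$. Unwinding the definition of $\T$ yields, writing $r^{(n-1)}$ for the rule entering iteration $n$ and $v^*_n$ for the value chosen there,
\begin{equation*}
\int_0^{i_{n+1}}\!\!J\big(\T[r^{(n-1)},v,n](\theta),\theta\big)\,d\theta=\mathrm{const}+\sum_{m<n}\gamma_m\big(\min\{v,\hat v^{(n-1)}_m\}\big)+\gamma_n(v),
\end{equation*}
so $v^*_n$ maximizes the right--hand side and $\hat v^{(n)}=\big(\min\{v^*_n,\hat v^{(n-1)}_1\},\dots,\min\{v^*_n,\hat v^{(n-1)}_{n-1}\},v^*_n\big)$, which is nondecreasing and hence feasible for the $n$--prefix; this gives the inequality ``$\le$'' against the $n$--prefix optimum at once. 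For ``$\ge$'', evaluate the display at $v=u^*_n$, where $u^*$ is an optimizer of the $n$--prefix, and apply the comparative--statics fact with $k=n-1$, $c=u^*_n$: since $\hat v^{(n-1)}$ is (inductive hypothesis) the maximizer of the $(n-1)$--prefix and $(u^*_1,\dots,u^*_{n-1})$ is feasible for the $u^*_n$--capped $(n-1)$--problem, whose optimum is $\min\{u^*_n,\hat v^{(n-1)}\}$ by that fact, we get $\sum_{m<n}\gamma_m(\min\{u^*_n,\hat v^{(n-1)}_m\})\ge\sum_{m<n}\gamma_m(u^*_m)$. Hence $\hat v^{(n)}$ solves the $n$--prefix, closing the induction. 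Taking $n=N$, the output $\hat x=r^{(N)}=x^*_{\hat v^{(N)}}$ is admissible (increasing on $[0,1]$) and $\hat v^{(N)}$ solves $(\star)$, i.e.\ \eqref{eq:simple}; by Theorem~\ref{th:1}---or directly from Lemmas~\ref{lem:decreasing}--\ref{lem:increasing}, which already give $\sup_{x\ \mathrm{increasing}}F[x]=\sup_v F[x^*_v]$---the rule $\hat x$ solves the planner's problem.

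The crux---and the reason Assumption~\ref{ass:reg} strengthens quasi--concavity to concavity---lies in the first two steps: noticing that the clamp structure of~\eqref{eq:tildex} makes each piece's contribution additively separable with \emph{concave} summands, and the comparative--statics fact that capping commutes with ironing, which is precisely what legitimizes the greedy left--to--right sweep of Algorithm~\ref{alg:1}. Two secondary points I would still need to check: that the one--dimensional maxima in the algorithm are attained (true when $J(\cdot,\theta)$ is continuous on the compact set $\mathcal X$---equivalently, being concave, upper semicontinuous); and that the transient spikes $r(i_{n+1})=h$ are harmless, which they are because they carry zero mass and are overwritten at the next iteration.
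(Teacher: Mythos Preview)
Your approach is correct and takes a genuinely different route from the paper. The paper never passes to the finite--dimensional problem~\eqref{eq:simple}; it works directly in function space. It defines the partial value $F_a[x]=\int_0^a J(x(\theta),\theta)\,d\theta$ and the optimizer sets $\mathcal O_a^k=\argmax\{F_a[x]:x\text{ increasing on }[0,a),\,x\le k\text{ there}\}$, establishes non--emptiness via a Helly--space compactness argument, and then proves the key \emph{truncation lemma} (Proposition~\ref{lem:truncation}): if $x_a\in\mathcal O_a^h$ then $\min\{k,x_a\}\in\mathcal O_a^k$. The proof of that proposition is the heart of the paper's argument and is obtained by a fairly intricate contradiction using convex combinations with $x_a$ and repeated appeals to Lemma~\ref{fact:closer}. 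Theorem~\ref{th:2} then follows by a short induction: if iteration $n-1$ outputs some $r\in\mathcal O^h_{i_n}$, Proposition~\ref{lem:truncation} lets one replace the restriction to $[0,i_n)$ of any $x^*\in\mathcal O^h_{i_{n+1}}$ by $\min\{v_n,r\}$ without loss, which is exactly what the algorithm searches over.

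Your route---observing that the clamp in~\eqref{eq:tildex} yields the \emph{additive} decomposition $c_n(v_n,v_{n+1})=\alpha_n(v_n)+\beta_n(v_{n+1})+\mathrm{const}$ (a genuine computation, easily checked since $\alpha_n(v_n)+\beta_n(v_{n+1})-c_n(v_n,v_{n+1})=\int_{i_n}^{i_{n+1}}J(x_R(\theta),\theta)\,d\theta$), reducing~\eqref{eq:simple} to a separable concave isotonic problem~$(\star)$, and recognizing Algorithm~\ref{alg:1} as the left--to--right PAV sweep for~$(\star)$---is more structural and explains \emph{why} the greedy sweep works. Your ``cap commutes with ironing'' comparative--statics step is precisely the finite--dimensional analogue of the paper's Proposition~\ref{lem:truncation}; you defer its proof to the PAV characterisation, whereas the paper proves its functional version from scratch. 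That is the one place your write--up is thinner: the sentence ``lowering the cap to $c$ simply caps every block value at $c$, which creates no new monotonicity violations and hence no new pooling'' needs the optimality of the PAV block structure as input. A self--contained alternative is to note that $G(v)=\sum_n\gamma_n(v_n)$ is \emph{modular}, so for any capped--feasible $u$ one has $G(u)+G(\hat v)=G(u\vee\hat v)+G(u\wedge\hat v)\le G(\hat v)+G(u\wedge\hat v)$, whence the capped optimum may be taken $\le\hat v$; then concavity of $G$ along the segment from $u$ to $\hat v$ (together with a short induction on the number of coordinates exceeding $c$) pins the optimum at $c\wedge\hat v$. Either way, both proofs ultimately rest on the same structural fact; your packaging via separable isotonic regression is cleaner and connects the algorithm to classical PAV, while the paper's is fully self--contained.
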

\paragraph{Proof of Theorem \ref{th:2}.} For $a\in [0,1)$, define the partial objective:
\[
        F_a[x]:= \int_0^a J(x(\theta),\theta) d\theta.
\]
Now, consider the set of allocation rules that are increasing on $[0,a)$ and below $k$ everywhere on that interval. Let $\mathcal{O}^k_a$ contain the allocation rules that maximize $F_a$ over this set:
\[
\mathcal O_a^k
:= \arg\max\Bigl\{F_a[x] \ \ \text{s.t.} \ \ x:[0,1]\to [l,h]  \text{ is increasing on }[0,a),\ x\le k \text{ on }[0,a)\Bigr\}.
\]
I first show $\mathcal O_a^k$ is non-empty for all $k\in [l,h]$ and $a\in (0,1]$. Let $\mathcal A_a^k$ be the set of increasing functions $x:[0,a]\to[l,k]$. By Helly's selection theorem, $\mathcal A_a^k$ is sequentially compact under pointwise convergence.
If $x_m\to x$ pointwise in $\mathcal A_a^k$, then $J(x_m(\theta),\theta)\to J(x(\theta),\theta)$ for each $\theta$.
By Assumption \ref{ass:1}, $|J(x_m(\theta),\theta)|\le M$ uniformly, so dominated convergence gives
$F_a[x_m]\to F_a[x]$. Hence $F_a$ attains its maximum on $\mathcal A_a^k$.

I also prove the following proposition. It says that when we have an $x$ in $\mathcal{O}_a^h$ and impose the additional constraint that $x$ be below $k$ on $[0,a)$, we need not resolve the problem, but can simply truncate the solution without this constraint by $k$.
\begin{proposition}\label{lem:truncation}
        If $x_a\in \mathcal{O}^{h}_a$, then $\min\{k, x_a\}\in \mathcal{O}^k_a$.
        \end{proposition}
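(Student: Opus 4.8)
Write $\tilde x := \min\{k, x_a\}$ for the truncation of $x_a$ from above by $k$ (this is the object the statement refers to: membership in $\mathcal O^k_a$ requires allocations bounded by $k$). Since $x_a$ is increasing on $[0,a)$, so is $\tilde x$, and $\tilde x \le k$ there, so $\tilde x$ is admissible in the program defining $\mathcal O^k_a$; the plan is to prove $F_a[\tilde x] \ge F_a[y]$ for every $y \in [l,h]^{[0,1]}$ that is increasing on $[0,a)$ with $y \le k$ on $[0,a)$, which gives $\tilde x \in \mathcal O^k_a$.

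The first step reduces to competitors lying below $x_a$. Given such a $y$, the rule $\max\{y,x_a\}$ is increasing on $[0,a)$ and $[l,h]$-valued, hence admissible for $\mathcal O^h_a$, so $F_a[x_a] \ge F_a[\max\{y,x_a\}]$ by optimality of $x_a$. At each $\theta$ the pair $\{\max\{y,x_a\}(\theta),\min\{y,x_a\}(\theta)\}$ equals $\{y(\theta),x_a(\theta)\}$, so $J(\max\{y,x_a\}(\theta),\theta)+J(\min\{y,x_a\}(\theta),\theta) = J(y(\theta),\theta)+J(x_a(\theta),\theta)$; integrating over $[0,a]$ and using the previous inequality yields $F_a[\min\{y,x_a\}] \ge F_a[y]$. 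Since $\min\{y,x_a\}$ is again increasing on $[0,a)$ and bounded by $k$, it is enough to treat competitors with $y \le x_a$, and for these $y \le \tilde x \le x_a$, so $\tilde x$ lies pointwise between $y$ and $x_a$; the argument of Lemma \ref{fact:closer} then gives $J(\tilde x(\theta),\theta) \ge \min\{J(y(\theta),\theta),J(x_a(\theta),\theta)\}$ for every $\theta$.

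The second step identifies where the pointwise comparison is already in force and reduces the remainder to a statement about constant stretches. Applying Theorem \ref{th:1} to the analogous problem on $[0,a]$ — whose relaxed solution is the restriction of $m$, where $m(\theta)$ denotes a pointwise maximizer of $J(\cdot,\theta)$ over $[l,h]$ — I would take $x_a$ of the clamped form \eqref{eq:tildex}, so that off a finite collection of maximal intervals on which $x_a$ is constant (call them bunches) one has $x_a = m$, hence $\tilde x = \min\{k,m\} \le m$ and, since $y \le \tilde x \le m$ with $J(\cdot,\theta)$ nondecreasing up to $m$, $J(\tilde x(\theta),\theta) \ge J(y(\theta),\theta)$ there. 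On a bunch $[\alpha,\beta]$ with $x_a \equiv c$, write $c_0 := \min\{c,k\}$ so that $\tilde x \equiv c_0$ there. A first‑order perturbation that locally extends the bunch (legitimate since $x_a \le c$ just to the left of $\alpha$ and $x_a \ge c$ just to the right of $\beta$) shows, in the limit, that the concave function $\Phi(u) := \int_\alpha^\beta J(u,\theta)\,d\theta$ is maximized over $[l,h]$ at $c$, hence over $[l,c_0]$ at $c_0$. Combining this with concavity of $\Phi$ and the pointwise bound already available on $\{\theta \in [\alpha,\beta] : m(\theta) \ge c_0\}$, I would conclude $\int_\alpha^\beta J(c_0,\theta)\,d\theta \ge \int_\alpha^\beta J(y(\theta),\theta)\,d\theta$ for every increasing $y \le c_0$ on $[\alpha,\beta]$; summing over bunches and adding the off‑bunch inequality gives $F_a[\tilde x] \ge F_a[y]$.

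The hard part is the bunch inequality in the last paragraph. On the subset of $[\alpha,\beta]$ where $x_a$, and hence $\tilde x$, strictly exceeds the pointwise optimum $m$, the comparison $J(\tilde x(\theta),\theta) \ge J(y(\theta),\theta)$ can fail, so one must offset that loss against the surplus on the complement, using that $c$ maximizes $\int_\alpha^\beta J(\cdot,\theta)\,d\theta$ together with concavity and the monotonicity of $y$; since that maximality balances overshoot against undershoot only in an averaged sense, converting it into the required integrated estimate — and correctly handling the endpoint monotonicity constraints of a bunch as well as any jump of $x_a$ there — is the delicate point, and is why the statement is proved separately in the appendix.
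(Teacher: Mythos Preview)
Your first step is correct and is a genuine simplification the paper does not exploit: the lattice identity $J(\max\{y,x_a\}(\theta),\theta)+J(\min\{y,x_a\}(\theta),\theta)=J(y(\theta),\theta)+J(x_a(\theta),\theta)$ together with optimality of $x_a$ in $\mathcal O^h_a$ cleanly reduces the problem to competitors $y\le x_a$, hence $y\le\tilde x$.

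The second step, however, has a real gap, and the sketch you give does not close it. Two issues. First, the claim that the bunch level $c$ maximizes $\Phi(u)=\int_\alpha^\beta J(u,\theta)\,d\theta$ over all of $[l,h]$ is not available in general from the perturbation you describe: for a bunch touching $0$ or $a$ only a one\nobreakdash-sided variation is feasible, and when adjacent $v$'s in the representation of Theorem~\ref{th:1} coincide the bunch spans several monotone pieces of $x_R$ with levels tied together, so optimality of $x_a$ does not deliver a separate first\nobreakdash-order condition on each bunch. Second, and more importantly, even granting that $c_0=\min\{c,k\}$ maximizes $\Phi$ over $[l,c_0]$, this only compares $c_0$ against \emph{constants}. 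Your target inequality compares $c_0$ against every increasing $y\le c_0$ on $[\alpha,\beta]$. On the set $\{m<c_0\}$ a non\nobreakdash-constant $y$ can sit strictly closer to $m$ than $c_0$ does, and nothing in your outline offsets that pointwise gain against the loss on $\{m\ge c_0\}$ beyond the averaged statement $\Phi(c_0)\ge\Phi(u)$, which is too coarse. Since a bunch can straddle several monotone pieces of $x_R$, Lemmas~\ref{lem:decreasing}--\ref{lem:increasing} applied to the bunch alone do not force the best such $y$ to be constant either.

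The paper's proof takes a completely different route and never decomposes into bunches. It works directly with $\mathcal O_a^k$: letting $b$ be the first crossing of $x_a$ with $k$, it introduces $c=\inf\{\theta\ge b:\ x(\theta)=k\text{ for some }x\in\mathcal O_a^k\}$ and $u=\sup\{x(b):x\in\mathcal O_a^k,\ x(c)=k\}$, and shows $u=k$ by contradiction. The engine is taking \emph{convex combinations} of elements of $\mathcal O_a^k$ with $x_a$; concavity of $F_a$ keeps such combinations optimal while pushing them toward $x_a$, which forces contradictions with the definitions of $c$ and $u$. A limiting lemma then manufactures an element of $\mathcal O_a^k$ equal to $x_a$ on $[0,b)$ and to $k$ on $[b,a)$, which is exactly $\min\{k,x_a\}$. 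If you want to rescue your approach, it is this convex\nobreakdash-combination device (or something of equal strength) that would have to replace the unproved bunch inequality.
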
        
\begin{proof}
Fix $x_a$ and note $\min\{x_a(\theta),k\}$ is admissible in the problem defining $\mathcal{O}^k_a$. Fix any other admissible $x$ and define:
\[
\tilde x(\theta)
=
\begin{cases}
x(\theta)+\max\{x_a(\theta)-k,\,0\}, & \theta\in[0,a),\\
h, & \text{otherwise}.
\end{cases}
\]
Since $x$ and $\max\{x_a(\theta)-k, \ 0 \}$ are non-decreasing, so is $\tilde x$. 
Moreover, $x(\theta)\le k$ for $\theta<a$ and $x_a(\theta)\le h$, so $\tilde x(\theta)\leq h$. Now, $x_a\in \mathcal{O}^{h}_a$, and so:
\begin{equation}\label{eq:opt_xstar}
\int_0^a J(x_a(\theta),\theta)\,d\theta \;\ge\;\int_0^a J(\tilde x(\theta),\theta)\,d\theta.
\end{equation}
Fix $\theta$ and consider two cases. If $x_a(\theta)\le k$, then $\min\{x_a(\theta),k\}=x_a(\theta)$ and $\tilde x(\theta)=x(\theta)$, hence:
\begin{equation}\label{eq:case1}
J(\min\{x_a(\theta),k\},\theta)-J(x(\theta),\theta)
=
J(x_a(\theta),\theta)-J(\tilde x(\theta),\theta).
\end{equation}
Now suppose $x_a(\theta)>k$. Let $\delta : = k-x(\theta)\geq 0$ and note that:
\[
\min\{x_a(\theta),k\} = x(\theta) + \delta.
\]
Since $J(\cdot,\theta)$ is concave, for all $u,u'$ such that $u<u'$:
\[
J(u+\delta,\theta)-J(u,\theta) \ \geq J(u'+\delta,\theta)-J(u',\theta).
\]
Recall $ x(\theta) <\tilde x(\theta)$, so we have:
\[
J(x(\theta)+\delta,\theta)-J(x(\theta),\theta)\;\ge\;J(\tilde x(\theta)+\delta,\theta)-J(\tilde x(\theta),\theta).
\]
Note $x(\theta) + \delta = k = \min\{x_a(\theta),k\}$ and $\tilde x(\theta)+\delta = x(\theta) + x_a(\theta) - k + (k-x(\theta)) = x_a(\theta)$, so:
\begin{equation}\label{eq:case2}
J(\min\{x_a(\theta),k\},\theta)-J(x(\theta),\theta)\;\ge\;J(x_a(\theta),\theta)-J(\tilde x(\theta),\theta).
\end{equation}
Integrating over $\theta\in[0,a]$ and combining \eqref{eq:opt_xstar}, \eqref{eq:case1} and \eqref{eq:case2} yields: 
\[
\int_0^a \bigl[J(\min\{x_a(\theta),k\},\theta)-J(x(\theta),\theta)\bigr]\,d\theta
\;\ge\;
\int_0^a \bigl[J(x_a(\theta),\theta)-J(\tilde x(\theta),\theta)\bigr]\,d\theta
\;\ge\;0,
\]
giving $\int_0^a J(\min\{x_a(\theta),k\},\theta)\,d\theta \ge \int_0^a J(x(\theta),\theta)\,d\theta$.
\end{proof}
I now present an inductive proof of Theorem \ref{th:2}. Recall that $|\mathcal{I}|\geq 2$ since $0,1\in \mathcal{I}$. The base case demonstrates that the first iteration of the algorithm produces $r\in \mathcal{O}^{h}_{i_{2}}$. The step shows that when the $n$th iteration starts with $r\in \mathcal{O}^{h}_{i_n}$, it produces $r\in \mathcal{O}^{h}_{i_{n+1}}$. The base and the step thus imply that the algorithm will produce $r\in \mathcal{O}^{h}_{i_{|\mathcal{I}|}}$ solving the planner's problem in $|\mathcal{I}|-1$ steps.

        \textbf{Base.} Since $\mathcal{O}_{i_2}^{h}\neq \emptyset$, Lemmas \ref{lem:decreasing} and \ref{lem:increasing} tell us there exists $x^*\in \mathcal{O}_{i_2}^{h}$ for which:
        \[
                x^*(\theta) :=
                \begin{cases}
                v_1 & \text{if }\theta \in[i_1,i_{2}) \text{ where $x_R$ is decreasing}, \\
                \min \left\{ v_2, \max \left\{ v_{1}, x_R(\theta) \right\} \right\} &  \text{if }\theta \in[i_1,i_{2}) \text{ where $x_R$ is increasing}.
                \end{cases}     
        \]
        Moreover, by Lemma \ref{fact:closer} we can without loss set $v_2=h$. The first iteration of the algorithm will recover such an $x^*$ when optimizing over $\tilde v$.
        
        \textbf{Step.} Since $\mathcal{O}^{h}_{i_{n+1}}\neq \emptyset$, Lemmas  \ref{lem:decreasing} and \ref{lem:increasing} tell us there exists $x^*\in \mathcal{O}_{i_{n+1}}^{h}$ for which:
        \[
                x^*(\theta) :=
                \begin{cases}
                v_n & \text{if }\theta \in[i_n,i_{n+1}) \text{ where $x_R$ is decreasing}, \\
                \min \left\{ v_{n+1}, \max \left\{ v_{n}, x_R(\theta) \right\} \right\} &  \text{if }\theta \in[i_n,i_{n+1}) \text{ where $x_R$ is increasing},
                \end{cases}     
        \]
        where $x^*(i_n)=v_n$. Moreover, by Lemma \ref{fact:closer} we can without loss set  $v_{n+1}=h$. Let $r\in \mathcal{O}^{h}_{i_n}$ be the allocation rule produced by the $n-1$st iteration of the algorithm. By Proposition \ref{lem:truncation}, $\min\{v_n, r\}\in \mathcal{O}^{v_n}_{i_n}$. Since $x^*(\theta)\leq v_n$ for $\theta<i_n$, this implies $F_{i_n}[\min\{v_n, r\}] \geq F_{i_n}[x^*]$ and so the following allocation rule weakly improves upon $x^*$:
        \[
                x^{**}(\theta) :=
                \begin{cases}
                \min \{v_n,r(\theta)\} & \text{if }\theta < i_n,\\
                v_n & \text{if }\theta \in[i_n,i_{n+1}) \text{ where $x_R$ is decreasing}, \\
                \max \left\{v_{n}, x_R(\theta)\right\} &  \text{if }\theta \in[i_n,i_{n+1}) \text{ where $x_R$ is increasing}.
                \end{cases}        
        \]
        Moreover, $x^{**}$ is increasing on $[0,i_{n+1})$ and takes values in $[l,h]$, so $x^{**}$ belongs to $\mathcal{O}_{i_{n+1}}^{h}$.
        The $n$th iteration of the algorithm will recover such an $x^{**}$ when optimizing over $\tilde v$.


\end{document}